\documentclass[12pt]{article}
\usepackage{cite}
\usepackage{amsmath,amsthm, amscd, amssymb, amsfonts}
\usepackage{appendix}
\usepackage{graphicx}
\bibliographystyle{unsrt}
\newtheorem{theorem}{\textbf{Theorem}}
\begin{document}
\title{Bell inequalities, Counterfactual Definiteness and Falsifiability}
\author{Justo Pastor Lambare}
\date{}
\maketitle
\begin{abstract}
We formally prove the existence of an enduring incongruence pervading a widespread interpretation of the Bell inequality and explain how to rationally avoid it with a natural assumption justified by explicit reference to a mathematical property of Bell's probabilistic model. Although the amendment does not alter the relevance of the theorem regarding local realism, it brings back Bell theorem from the realm of philosophical discussions about counterfactual conditionals to the concrete experimental arena.
\end{abstract}

%\keywords{Counterfactual Definiteness; Realism;Locality; Hidden variables}

%\tableofcontents  % optional
%
\section{Introduction}\label{sec:intro}
The purpose of this letter is not so much to contribute to the debate between quantum localists and non-localists, but to advocate for the logical consistency of the Bell theorem, correcting a widespread view that spoils such consistency and which has produced unnecessary confusion and unjustified criticisms \cite{pKup15,pCza20,pLam20b}.

The problem dates back to around 1971 when Henry Pierce Stapp\cite{pSta71} introduced counterfactual reasoning to ague for the nonlocal character of quantum theory.
H. P. Stapp produced a mathematical contradiction showing the incompatibility of quantum mechanical predictions with locality assumptions.
However, he did not obtain a directly falsifiable inequality as Bell originally proposed\cite{pBel64} and was later generalized by Clauser, Horne, Shimony, Holt (CHSH)\cite{pCHSH69}.

Unfortunately, after Stapp's paper, many physicists and philosophers began to assume that counterfactual reasoning is either necessary or correct to derive the CHSH form of the Bell inequality(BI)\cite{pEbe77,pPer78,pSky82,pFor86,pGil14,pZuk14,pKup15,pWol15,pBou17,pJCh19,pCza20}.

The idea became so widespread that even a new term was invented for this kind of counterfactual speculation used to prove the BI, \emph{counterfactual definiteness} (CFD).
We recall that CFD can be defined  as: \emph{``This assumption allows one to assume the definiteness of the results of measurements, which were actually not performed on a given individual system. They are treated as unknown, but in principle defined values.
This is in a striking disagreement with quantum mechanics, and the complementarity principle.''}\cite{pZuk14} CFD is believed to be implied by realism or be equivalent to it. It is supposed \emph{``to be the factor separating classical from quantum theories''}\cite{pHan19e}.

Contrary to common beliefs, the issues associated with CFD to prove the BI are not at all related to differences between classical physics and quantum mechanics peculiarities.
The difficulty is related to the naive application of methods valid for the formal sciences in a context that is characteristic of the factual sciences.

Despite philosophical or interpretational opinions, we show that the physical untenability of CFD -- if we are going to consider the BI as a meaningful and falsifiable result -- can be objectively and mathematically proved.
\section{Derivation of the Bell Inequality}\label{sec:deriv}
We succinctly review the derivation of the deterministic version of the CHSH form of the Bell inequality.
We shall consider Bell's and CHSH's standard inequality derivation, as well as the counterfactual version.
\subsection{The Standard Derivation}\label{ssec:BSD}
The main assumptions are locality, measurement independence(MI), and realism.
Realism is taken in the Einstein, Podolsky, and Rosen(EPR)\cite{pEPR35} sense as the existence of the ``elements of physical reality'' that we can predict with certainty.
In this case, they are the values of the spin given by the local realistic functions $A(a,\lambda)$ and $B(b,\lambda)$, assuming values $\pm1$.

John Bell and EPR considered that realism\footnote{Bell did not mention realism. Realism is a contentious concept\cite{pNor07}. In our case, we take realism as defined above and is tantamount to determinism. This view is appropriate for our discussion.} is a consequence of locality\cite{pMau14a,pNor07,pLau18a,pLau18b}, while others consider it an independent
assumption\cite{pZuk14,pWis14,pWer14,pGil14}, however, this polemic is not important for our argument.
Here we assume realism as an independent hypothesis.

MI is the assumption that the distribution function of the hidden variables $p(\lambda)$ is independent of the device setting variables.
The correlation term of Alice and Bob joint measurements of the singlet state is given by
{\small
\begin{equation}\label{eq:cv}%correlation value
    E(a_i,b_k) = \int p(\lambda)\,A(a_i,\lambda)B(b_k,\lambda)\,d\lambda\,;\quad i,k\in\{1,2\}
\end{equation}
}
By adequately adding the correlation terms
\begin{eqnarray}
S    &=& E(a_1,b_1)-E(a_1,b_2)+E(a_2,b_1)+E(a_2,b_2)\label{eq:s1}\\
     &=& \int p(\lambda)\,C(\lambda)\,d\lambda\label{eq:s2}\\
|S|  &\leq& \int p(\lambda)\,|C(\lambda)|\,d\lambda\label{eq:s3}\\
     &\leq& \int p(\lambda)\, 2\,d\lambda\label{eq:s4}\\
     &\leq& 2\int p(\lambda)\,d\lambda\label{eq:s5}\\
     &\leq& 2\label{eq:s6}
\end{eqnarray}
The term $C(\lambda)$ in (\ref{eq:s2}) is given by
{\small
\begin{eqnarray}
  C(\lambda) &=& A(a_1,\lambda)B(b_1,\lambda)-A(a_1,\lambda)B(b_2,\lambda)+A(a_2,\lambda)B(b_1,\lambda)\nonumber\\
             &=&+A(a_2,\lambda)B(b_2,\lambda)\label{eq:cs} %crucial step
\end{eqnarray}
}
The last equation is crucial for the derivation and a frequent source of bewilderment because it is necessary to have the same value of $\lambda$ in the four addends of (\ref{eq:cs}) to properly factorize the equation and find the bound of 2 for $|S|$. Although we only discuss deterministic hidden variable models, basically the same problem is present in stochastic models.
\subsection{The Counterfactual Derivation}\label{ssec:CFDD}
Counterfactual reasoning allows a straightforward and elegant Bell inequality derivation.
Perhaps, that is the reason for its widespread acceptance, notwithstanding its inconsistency, which we shall make evident in sec. \ref{sec:TPICFD}.

The necessary hypothesis are CFD, locality and freedom.
We shall discuss freedom in the next section.

The argument is the following.
Let $A_i$ and $B_k$; $i,k\in\{1,2\}$ be Alice's and Bob's results when they jointly measure their particles under settings $a_i$ and $b_k$.
Assuming Alice and Bob actually measure $A_1$ and $B_1$, then by locality if Bob would have used setting $2$ instead of $1$, Alice's result could not have changed and we obtain the counterfactual result $A_1B_2$.
Applying a similar reasoning we find the other two counterfactual results and we can write
\begin{eqnarray}
s &=& A_1B_1 - A_1B_2 + A_2B_1 + A_2B_2\label{eq:cfe1}
\end{eqnarray}
We can apply the same argument if, instead of actually measuring $A_1$ an $B_1$, Alice and Bob would have actually measured any of the other three possibilities.
The important point is that (\ref{eq:cfe1}) is according to Stapp's counterfactual description: \emph{``Of these eight numbers only two can be compared directly to experiment. The other six correspond to the three alternative experiments that could have been performed but were not''}.\footnote{It is fair to say that, in his 1971 article\cite{pSta71}, H. P. Stapp did not inconsistently employ counterfactual reasoning. He did not apply it to obtain (\ref{eq:cfe1}) and derive the BI.}

Considering that $A_i$ and $B_k$ take only two values $\pm1$, (\ref{eq:cfe1}) becomes
\begin{eqnarray}
s &=& A_1(B_1 - B_2) + A_2(B_1 + B_2)\label{eq:cfe2}\\
s &=&\pm2\label{eq:cfe3}
\end{eqnarray}
taking mean values in (\ref{eq:cfe1}) according to (\ref{eq:cfe2}) and (\ref{eq:cfe3}) and putting $S=\langle s\rangle$
\begin{equation}\label{eq:cfe4}
\left.
\begin{array}{rclc}
-2 \leq S &= & \langle A_1B_1 - A_1B_2 + A_2B_1 + A_2B_2\rangle                                      & \leq 2\\
-2\leq S  &= & \langle A_1B_1\rangle - \langle A_1B_2\rangle + \langle A_2B_1\rangle + A_2B_2\rangle & \leq 2\\
|S|       &\leq& 2                                                                                   &
\end{array}
\right\}
\end{equation}
The CFD derivation has the advantage that the hidden variables become unnecessary as well as the MI hypothesis.
\subsection{Freedom and the Bell Inequality}\label{ssec:FBI}
The freedom of the experimenters to choose their settings is a fundamental hypothesis for obtaining the Bell inequality.
Both the standard derivation and the counterfactual version need that hypothesis.
\subsubsection{Measurement Independence and Freedom}
The fact that the distribution function $p$ does not contain the experimenters' settings parameters -- i.e., MI -- was an ad hoc hypothesis in Bell's 1964 derivation\cite{pBel64}.

MI, and its relation to freedom, was recognized by Bell as an independent hypothesis in 1975\cite{pBel76} and further discussed specially after a criticism of Shimony, Horne, and Clauser\cite{pSHC76}.

If $p(a,b)$ is Alice's and Bob's joint probability of choosing their settings, their freedom is defined by
\begin{equation}\label{eq:free}
p(a,b|\lambda)=p(a,b)
\end{equation}
According to Bayes's theorem of probability theory
\begin{equation}\label{eq:bayes1}
p(a,b,\lambda)=p(a,b|\lambda)p(\lambda)=p(\lambda|a,b) p(a,b)
\end{equation}
From (\ref{eq:free}) and (\ref{eq:bayes1}) we obtain MI
\begin{equation}\label{eq:mi}
p(\lambda|a,b)=p(\lambda)
\end{equation}
According to our definition, freedom(F) and MI -- (\ref{eq:free}) and (\ref{eq:mi}) -- are equivalent but we only need the implication
\begin{equation}\label{leq:fmi}
F\rightarrow MI
\end{equation}
Let R and L stand for realism and locality, the Bell theorem can be expressed as
\begin{equation}\label{leq:SBT}
R \wedge L \wedge F\rightarrow BI
\end{equation}
\subsubsection{CFD and Freedom}
The BI counterfactual version does not assume hidden variables, so freedom, as (\ref{eq:free}) defines it, is not necessary.
However, some liberty is still necessary to derive the counterfactual equation (\ref{eq:cfe1})\cite{pEbe77,pGil14}.
We must be able to choose counterfactually between setting possibilities.

The freedom necessary to derive (\ref{eq:cfe1}) is less stringent than the one required by MI.
To allow a counterfactual election of $a$ and $b$, instead of requiring (\ref{eq:free}),  we only need those possibilities to exist for possible values of $\lambda$
\begin{equation}\label{eq:cf}
p(a,b|\lambda)\neq 0
\end{equation}
Let us call this freedom \emph{counterfactual freedom}(CF), then the Bell theorem according to CFD  is
\begin{equation}\label{leq:CBT}
CFD \wedge L \wedge CF\rightarrow BI
\end{equation}
(\ref{eq:free}) already implies CF since it would not make sense to speak of freedom when one of the possible settings is not available to  an experimenter
\begin{equation}\label{leq:fandcf}
F\rightarrow CF
\end{equation}
The fact that experimenters may still have freedom despite the violation of (\ref{eq:mi}) is discussed, for instance, in Refs. \citen{pHal11,bHal16}. In these references, freedom is not defined as equivalent to MI, as we did.
There, the concept of freedom only requires $p(a,b)\neq0$ so that $a$ and $b$ are both eligible.
Here we define counterfactual freedom by $p(a_i,b_k|\lambda)\neq0,\,i,k\in\{1,2\}$ for all values of $\lambda$.
Thus, CF does not require (\ref{eq:mi}) and we may have $p(\lambda|a,b)\neq p(\lambda)$.
\section{The Physical Irrelevance of CFD}\label{sec:TPICFD}
We present two theorems proving that CFD is physically irrelevant, as it does not lead to a falsifiable result.
Furthermore, although we shall not prove it here, it can be shown that when the BI is derived through CFD, it cannot be meaningfully compared with the quantum mechanical predictions \cite{pLam20a}.
\begin{theorem}\label{t:t1}
The realism hypothesis(R) does not imply counterfactual definiteness
\begin{eqnarray}
\neg(R &\longrightarrow CFD)\label{leq:t1}
\end{eqnarray}
\end{theorem}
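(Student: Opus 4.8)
The plan is to read $\neg(R\longrightarrow CFD)$ as the claim that $R$ can hold while $CFD$ fails, and to establish it by a short reductio resting on the bookkeeping assembled in Section~\ref{ssec:FBI}. Suppose, toward a contradiction, that $R\longrightarrow CFD$. Composing this with the counterfactual Bell theorem (\ref{leq:CBT}), $CFD\wedge L\wedge CF\longrightarrow BI$, yields at once $R\wedge L\wedge CF\longrightarrow BI$. This would be strictly stronger than the standard theorem (\ref{leq:SBT}), $R\wedge L\wedge F\longrightarrow BI$ --- and, as I show next, too strong: it would deliver the inequality from realism and locality together with only the weak counterfactual freedom $CF$ of (\ref{eq:cf}), a premise that, unlike the full freedom $F$, does not entail measurement independence (\ref{eq:mi}).

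First I would refute that strengthened implication by exhibiting a model of its hypotheses that breaks its conclusion. There are local deterministic models --- so the functions $A(a,\lambda)$ and $B(b,\lambda)$ exist and $R\wedge L$ holds --- in which $MI$ fails while $CF$ survives and the correlations violate the Bell bound: the measurement-dependent models of Refs.~\citen{pHal11,bHal16}, in which $p(\lambda|a,b)$ genuinely depends on the settings yet $p(a_i,b_k|\lambda)\neq0$ for every $\lambda$ and every $i,k$, so that every setting pair remains available. (Should a particular construction restrict the $\lambda$-support for some setting pair, mixing in an arbitrarily small uniform $\lambda$-component restores full support without spoiling a strict violation.) Any such model realizes $R\wedge L\wedge CF$ together with $\neg BI$, contradicting the implication derived above; hence $\neg(R\longrightarrow CFD)$, which is (\ref{leq:t1}).

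I would then append a model-level restatement of the same fact to make explicit \emph{where} $CFD$ outruns $R$. In such a measurement-dependent local deterministic model $R$ and $CF$ hold by construction, but $CFD$ fails, because the step underwriting (\ref{eq:cfe1}) --- ``if Bob had used setting $2$ instead of $1$, Alice's result could not have changed'' --- is invalid in it: counterfactually switching Bob's setting is, precisely through the failure of $MI$, correlated with a change of $\lambda$, and $A(a_1,\cdot)$ need not agree on the two relevant $\lambda$-populations. Equivalently, there is no common hidden-variable value relative to which all four products $A_iB_k$ are the counterfactual outcomes, so the expression (\ref{eq:cfe1}) cannot be assembled and $s=\pm2$ cannot be asserted. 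The moral is that the operative content of $CFD$ tacitly builds in a cross-context consistency of the unperformed outcomes --- an $MI$-type condition --- whereas $R$ only asserts that $A(a,\lambda)$ and $B(b,\lambda)$ exist, and Section~\ref{ssec:FBI} already records $MI$ as independent of $R$.

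The main obstacle is not computational. The clean reductio leans on being granted (\ref{leq:CBT}) as a premise and on producing one measurement-dependent model with a \emph{strict} Bell violation and with full $\lambda$-support under every setting pair, so that $CF$ is genuinely met; the padding remark above settles the latter point. The more informative direct argument carries the extra burden of pinning down $\neg CFD$: since $CFD$ is stated only informally (``unknown, but in principle defined values''), one must isolate the content actually used in passing to (\ref{eq:cfe1}) and (\ref{eq:cfe4}) --- the joint, cross-context definiteness of the unperformed outcomes --- and show that exactly this can fail while $R$ stands. Once that identification is fixed, nothing further is needed.
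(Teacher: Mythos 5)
Your proposal is correct and follows essentially the same route as the paper: both establish $\neg(R\rightarrow CFD)$ by exhibiting a local deterministic (hence realistic) model that violates the Bell inequality while respecting counterfactual freedom but not measurement independence, and then invoking (\ref{leq:CBT}) (in your case via reductio, in the paper via its contrapositive) to conclude that CFD must fail even though R holds. The only substantive difference is the counterexample used --- the paper works through Feldmann's explicit plane model, computing $E(a_i,b_k)=-\cos(a_i-b_k)$ and $|S|=2\sqrt{2}$ and handling the $\lambda$-support issue in a footnote, whereas you appeal to the measurement-dependent models of Refs.~\cite{pHal11,bHal16} and restore full support by a small uniform admixture.
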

\begin{proof}
We base the proof of the theorem on Michel Feldmann's local realistic model.\footnote{Feldmann's model \cite{pFel95} is a plane model and was later generalized to three dimensions\cite{pDeg05}. In these models, the hidden variable is the angle determining the spin direction.} This model violates the Bell inequality by transgressing MI.

Adapting Feldmann's notation to the one we used in sec. \ref{ssec:BSD} and restricting spin measurements to planes orthogonal to the particles motion with $\lambda\in[0,2\pi]$
\begin{eqnarray}
A(a_i,\lambda)     & = & + sgn(\cos(\lambda-a_i))\\
B(b_i,\lambda)     & = & - sgn(\cos(\lambda-b_i))\\
p(\lambda|a_i,b_k)=p(\lambda|u)    & = & \frac{1}{4}|\cos(\lambda-u)|,\quad u\in\{a_i,b_k\}\label{eq:felrho}
\end{eqnarray}
The model is local because the $\lambda$ variables correspond to events in the common causal or past light cones of A and B.
Besides, neither $A(a_i,\lambda)$ nor $B(b_k,\lambda)$ depend on the other's setting, therefore they represent realistic functions completely determined by local factors and the common causes represented by $\lambda$.

In (\ref{eq:felrho}) either $u=a_i$ or $u=b_k$, thus in Feldmann's model $p$ depends only on one setting.
For instance, we can put $u=a_1$ when Alice uses setting $a_1$ no matter what setting Bob uses, similarly when Alice chooses setting $a_2$.
So, there are no ambiguities because his consistency equations are fulfilled
\begin{eqnarray}
\langle A_iB_k\rangle=E(a_i,b_k) &=& \int_0^{2\pi}A(\lambda,a_i)B(\lambda,b_k)p(\lambda|a_i)\,d\lambda\nonumber\\
                                 &=& \int_0^{2\pi}A(\lambda,a_i)B(\lambda,b_k)p(\lambda|b_k)\,d\lambda\label{eq:felcor}
\end{eqnarray}
Although we can evaluate the correlation $E(a_i,b_k)$ using either $a_i$ or $b_k$ in $p$, both give the same result, so  no ambiguity is produced.
The same is true for,
{\small
\begin{equation}
\langle A_i\rangle=E(a_i)=\int_0^{2\pi}A(\lambda,a_i)p(\lambda|a_i)d\lambda = \int_0^{2\pi}A(\lambda,a_i)p(\lambda|b_k)d\lambda=0
\end{equation}
\begin{equation}
\langle B_k\rangle=E(b_k)=\int_0^{2\pi}B(\lambda,b_k)p(\lambda|b_k)d\lambda = \int_0^{2\pi}B(\lambda,b_k)p(\lambda|a_i)d\lambda=0
\end{equation}
}
It is easy to compute (\ref{eq:felcor}) and find
\begin{eqnarray}
E(a_i,b_k)=- cos(a_i-b_k)\label{eq:qmcor}
\end{eqnarray}
Thus, Feldmann's model is local realistic and reproduces the quantum mechanical correlations of the singlet state thus violating the CHSH inequality; really, for certain appropriate settings it is known that (\ref{eq:qmcor}) implies
\begin{eqnarray}
|S|=2\sqrt{2}\label{ap2:slim}
\end{eqnarray}
Although Feldman's model infringes MI, it respects CF.
In fact, there is nothing in the model imposing  $p(a_i,b_k|\lambda)=0$ for some $\lambda$ with $i,k\in\{1,2\}$, therefore we can safely assume that all four settings possibilities are counterfactually accessible.
\footnote{From Bayes's theorem $p(\lambda|a_i,b_k)p(a_i,b_k)=p(a_i,b_k|\lambda)p(\lambda)$ and, according to (\ref{eq:felrho}), the first factor in the left hand vanishes at four values $\lambda-a_i=\pi/2(Mod\,2\pi)$  and $\lambda-b_k=\pi/2(Mod\,2\pi)$. However, we can have $p(a_i,b_k|\lambda)\neq0$ by requiring $p(\lambda)=0$ at those points.
}

Finally, we have that Feldmann's model complies with realism(R), locality(L), and counterfactual freedom(CF).
Since it violates the Bell inequality then, according to (\ref{leq:CBT}), we cannot have CFD notwithstanding the presence of realism.
\begin{eqnarray}
\neg(R \longrightarrow CFD)\nonumber
\end{eqnarray}
\end{proof}
%
%*************************************************************************
\begin{theorem}\label{t:t2}
When the BI is derived by assuming CFD, L, and F, experimental violations of the inequality cannot discard local realism.
\end{theorem}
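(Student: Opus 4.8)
The plan is to establish Theorem~\ref{t:t2} by combining the logical structure of the CFD derivation, $CFD \wedge L \wedge CF \rightarrow BI$ from (\ref{leq:CBT}), with the result already secured in Theorem~\ref{t:t1}. The key observation is that a falsification argument of the schematic form ``$H \rightarrow BI$, experiment shows $\neg BI$, therefore $\neg H$'' is only meaningful if the conjunction of hypotheses $H$ is something that local realism actually entails; otherwise the refutation of $H$ says nothing about local realism. So the first step is to make precise what ``discarding local realism'' would require: one would need $R \wedge L \wedge F \rightarrow (CFD \wedge L \wedge F)$, i.e., that the hypothesis set of the CFD derivation is a consequence of the local-realistic premises together with the freedom assumption actually available in experiments.

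Second, I would isolate the offending conjunct. We have $F \rightarrow MI$ from (\ref{leq:fmi}) and $F \rightarrow CF$ from (\ref{leq:fandcf}), so $L$ and $F$ pose no difficulty; the entire burden falls on whether $R \wedge L \wedge F \rightarrow CFD$. Since $F$ and $L$ do not by themselves deliver CFD (CFD is precisely the extra counterfactual-definiteness posit), this reduces to the question of whether $R \rightarrow CFD$, possibly in the presence of $L$ and $F$. Here I would invoke Theorem~\ref{t:t1}: Feldmann's model satisfies $R$, $L$, and $CF$, yet violates the Bell inequality, hence by (\ref{leq:CBT}) it cannot satisfy CFD. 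Thus $R \wedge L \wedge CF \not\rightarrow CFD$, and a fortiori $R \wedge L \wedge F \not\rightarrow CFD$ fails to hold as a matter of logic --- there is a consistent local-realistic world in which CFD is false.

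Third, I would close the argument by contraposition at the level of the falsification inference. Suppose an experiment yields $|S| > 2$, i.e., $\neg BI$. From $CFD \wedge L \wedge CF \rightarrow BI$ one may infer $\neg(CFD \wedge L \wedge CF)$, hence (granting $L$ and the operative freedom) $\neg CFD$. But to conclude $\neg(R \wedge L)$ one would need $R \wedge L \rightarrow CFD$ (modulo freedom), which is exactly what Theorem~\ref{t:t1} denies. Therefore the experimental violation refutes only the conjunct CFD, a hypothesis not underwritten by local realism, and local realism survives: Feldmann's model is an explicit witness that is local, realistic, and compatible with $|S| = 2\sqrt{2}$.

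I expect the main obstacle to be a presentational rather than a mathematical one: stating carefully that the derivation (\ref{leq:CBT}) is logically valid — so no fault lies in the inference — while the premises it rests on are not ones that local realism forces upon us, and making clear that this is a claim about which hypotheses are \emph{dispensable} relative to $R \wedge L$, not a claim that the CFD derivation is internally flawed. One must also be careful to handle the role of $F$ versus $CF$ consistently with Section~\ref{ssec:FBI}, since Theorem~\ref{t:t1} is phrased with $CF$ while Theorem~\ref{t:t2} is phrased with $F$; the bridge is simply (\ref{leq:fandcf}), $F \rightarrow CF$, which makes the Feldmann counterexample applicable a fortiori.
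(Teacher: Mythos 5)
Your proposal follows essentially the same route as the paper's own proof: use $F\rightarrow CF$ from (\ref{leq:fandcf}) to turn (\ref{leq:CBT}) into $CFD\wedge L\wedge F\rightarrow BI$, apply modus tollens to an experimental violation while retaining $L$ and $F$ so that only $CFD$ is rejected, and then invoke Theorem~\ref{t:t1} to conclude that rejecting $CFD$ gives no warrant for rejecting $R$, hence local realism is not discarded. One caveat: your intermediate step claiming that $R\wedge L\wedge CF\not\rightarrow CFD$ gives \emph{a fortiori} the failure of $R\wedge L\wedge F\rightarrow CFD$ runs the wrong way logically. Passing from $CF$ to $F$ strengthens the antecedent, so a counterexample to the weaker implication need not refute the stronger one; in fact Feldmann's model violates MI, and since the paper takes $F$ to be equivalent to MI, that model cannot witness the failure of any implication with $F$ among its premises. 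This slip is not fatal, because the inference you (and the paper) actually need to block is only $\neg CFD\Rightarrow\neg R$: it suffices that $R$ (or $R\wedge L$) does not entail $CFD$, and the Feldmann witness of Theorem~\ref{t:t1} does satisfy $R$, $L$ and $\neg CFD$, which is exactly the weaker fact the paper's proof uses.
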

\begin{proof}
According to (\ref{leq:fandcf}), CF is valid, then by (\ref{leq:CBT})
\begin{eqnarray}
CFD\wedge L\wedge F\rightarrow BI
\end{eqnarray}
When experiments violate the BI, if we retain L and F, we must reject CFD then, as a consequence of theorem \ref{t:t1}, we cannot infer that realism is false; therefore, we cannot discard local realism.
\end{proof}
\section{The Physical Interpretation of Bell's Derivation}\label{sec:exprot}
As we mentioned in sec. \ref{ssec:BSD}, equation (\ref{eq:cs}) is a common source of confusion because each term is the product of results of a joint measurement performed on each particle of an entangled pair.
Therefore, (\ref{eq:cs}) contains the results of four different experiments. However, the four different experiments are supposed to contain the same hidden variable $\lambda$.

How can that be? The experimenter has no control over the hidden variables and we do not even know whether they actually exist.
All we know is that each singlet pair is supposed to be generated with a given hidden variable value.
Some have proposed to make four successive measurements on the same entangled particle pair\cite{pDPCB72}.
Since it is not possible to perform more than one measurement on a particle without disturbing it,\footnote{In the case of photons, the particles are annihilated after they are absorbed.} the most popular solution is to interpret (\ref{eq:cs}) according to the CFD recipe: only one experiment is actually performed, the other three represent the results of experiments that could have been performed but were not.

However, when we follow the derivation steps in the correct order starting by (\ref{eq:s1}), such an interpretational conundrum does not arise.
\footnote{A common error inducing practice starts the derivation by (\ref{eq:cs}), see Ref. \citen{pLam17a}.}

Really, to falsify the theoretical result, all we need is to measure the four correlation terms in (\ref{eq:s1}).
The obtention of (\ref{eq:s1}) -- with actual values -- only requires the repetition of individual experiments measuring  the ``clicks''  $A'(a_i)$ and $B'(b_k)$ for each singlet state.
Real experiments usually record the products $A_{ik}=A'(a_i)B'(b_k)$, not the individual ``clicks'' but this is not important.

Table \ref{tabla:t1} shows a summary of the actual data that would be obtained in an idealized experiment with $100\%$ percent detection efficiency.
\begin{table}
\caption{Experimental Results\label{tabla:t1}}
\begin{center}
\begin{tabular}{c|c|c|c|c|c}
Event\# & $A_1B_1$  & $A_1B_2$ & $A_2B_1$ & $A_2B_2$ & $\lambda$\\
\hline
$1$     &   $+1$     & ***     &   ***     &    ***    & unknown\\
$2$     &   ***     & $-1$     &   ***     &    ***    & unknown\\
$3$     &   ***     & ***      &   ***     &    $-1$   & unknown\\
%$4$     &   ***     & ***      &   $+1$    &   ***    & unknown\\
$\vdots$& $\vdots$  &$\vdots$   & $\vdots$    &$\vdots$     &$\vdots$
\end{tabular}
\end{center}
\end{table}
Notice that, so far, we have not move from (\ref{eq:s1}) in the derivation and we have obtained with actual measurements the value $|S|$. The experimental result found for $|S|$ falsifies what we have assumed in steps (\ref{eq:s2}) through (\ref{eq:s6}) of the derivation.

We proved in sect. \ref{sec:TPICFD} that, although we can predict the results of impossible experiments, if we use unrealizable predictions, the proof becomes physically meaningless. We must, instead, explain the presumed emergence of (\ref{eq:cs}) through a physically meaningful mechanism:
\begin{quote}
\textbf{Physical Interpretation of (\ref{eq:cs})}: After the experiment has been run for a sufficiently long time, the values of $\lambda$ are supposed to be randomly and uniformly repeated for the different settings used in the experiment. This  constitutes a statistical regularity assumption that Willy De Baere\cite{pDBa84a} termed the \emph{reproducibility hypothesis}.\footnote{De Baere's hypothesis is not an independent assumption. It can be shown to be a consequence of the hidden variables set finiteness and of MI\cite{pLam20a}.}
\end{quote}
The former interpretation is what physically justifies the mathematical step from (\ref{eq:s1}) to (\ref{eq:s2}) and the emergence of (\ref{eq:cs}).
It allows the extraction of $p$ as a common factor in step (\ref{eq:s2}) and constitutes the concrete mechanism that would validate and explain the emergence of the infamous (\ref{eq:cs}) with data that would be obtained in actual experiments, falsifying local realism.

On the other hand, when we base our derivation on CFD, theorem \ref{t:t2} proves the violation of the inequality cannot falsify local realism.
The last result should not be surprising.
When we base our derivation on impossible experiments, the comparison of our theoretical results with the experiments is meaningless.

Curiously, the point seems to be very hard to grasp.
That is why, in sect. \ref{sec:TPICFD}, we chose a mathematical proof that we hope would end decades of faulty arguments and interpretational discussions.
The problem is related to the subtle difference between predictability and testability.
\section{Conclusions}
John Bell's notable breakthrough was to replace EPR's thought experiment by another one which does not involve irreproducible situations so that it could be experimentally tested. Surprisingly, this turned out to be a very subtle point.
As Bell expressed concerning determinism\cite{pBel81}, it is remarkably difficult to get this point across, that incompatible experiments are not a presupposition of the analysis.

Although counterfactual reasoning is a valid principle for philosophical discussions, theorem \ref{t:t2} proves that when it is used to prove the BI, the physically relevant properties become logically disconnected from the experimental results.
This means that, although a theory can make counterfactual predictions, when such metaphysical speculations are confronted with actual facts, logical inconsistencies may turn out to be unavoidable.

Unlike the EPR reasoning, the Bell inequality is a testable theoretical prediction concerning what happens in actually performed experiments, not a philosophical contemplation about results of impossible experiments mistakenly assumed falsifiable.
\section*{Acknowledgements}
The author wants to express his gratitude to Dr. Michael Hall for useful discussions on subjects related to this manuscript.
%
%\appendix
%
\bibliography{BTCFD-arxiv-8}

\begin{thebibliography}{10}

\bibitem{pKup15}
Marian Kupczynski.
\newblock Bell inequalities, experimental protocols and contextuality.
\newblock {\em Found. Phys.}, 45:735--753, 2015.

\bibitem{pCza20}
Marek Czachor.
\newblock A {L}oophole of {A}ll ``{L}oophole-{F}ree'' {B}ell-{T}ype {T}heorems.
\newblock {\em Foundations of Science}, 25:971--985, 2020.

\bibitem{pLam20b}
J.~P. Lambare.
\newblock Comment on ``{A} {L}oophole of {A}ll ``{L}oophole-free''
  {B}ell-{T}ype {T}heorems''.
\newblock {\em Found Sci}, 2020.

\bibitem{pSta71}
H.~P. Stapp.
\newblock S-matrix interpretation of quantum theory.
\newblock {\em Phys. Rev. D}, 6 B:1303--1320, 1971.

\bibitem{pBel64}
J.~S. Bell.
\newblock On the {E}instein-{P}odolsky-{R}osen paradox.
\newblock {\em Physics}, 1:195--200, 1964.

\bibitem{pCHSH69}
J.F. Clauser, M.A. Horne, A.~Shimony, and R.A. Holt.
\newblock Proposed experiment to test local hidden-variables theories.
\newblock {\em Phys.Rev.Lett.}, 23:640--657, 1969.

\bibitem{pEbe77}
P.~Eberhard.
\newblock Bell's {T}heorem without {H}idden {V}ariables.
\newblock {\em Nuov. Cim.}, 38 B:75--79, 1977.

\bibitem{pPer78}
A.~Peres.
\newblock Unperformed experiments have no results.
\newblock {\em American Journal of Physics}, 46:745--747, 1978.

\bibitem{pSky82}
B.~Skyrms.
\newblock Counterfactual definiteness and local causation.
\newblock {\em Philosophy of Science}, 49 B:43--50, 1982.

\bibitem{pFor86}
M.~R. Forster.
\newblock Counterfactual reasoning in the {Bell-EPR} paradox.
\newblock {\em Philosophy of Science}, 53:133--144, 1986.

\bibitem{pGil14}
R.~D. Gill.
\newblock Statistics, causality and {B}ell's theorem.
\newblock {\em Statistical Science}, 29:512--528, 2014.

\bibitem{pZuk14}
M.~Zukowski and C.~Brukner.
\newblock Quantum non-locality—it ain't necessarily so....
\newblock {\em Phys. A: Math. Theor.}, 47:424009, 2014.

\bibitem{pWol15}
S.~Wolf.
\newblock Nonlocality without counterfactual reasoning.
\newblock {\em Phys. Rev. A}, 92:052102, 2015.

\bibitem{pBou17}
S.~Boughn.
\newblock Making sense of {B}ell's theorem and quantum nonlocality.
\newblock {\em Found. of Phys.}, 47:640--657, 2017.

\bibitem{pJCh19}
J.~Christian.
\newblock Bell's {T}heorem versus {L}ocal {R}ealism in a {Q}uaternionic {M}odel
  of {P}hysical {S}pace.
\newblock {\em IEEE Access}, 7:133388--133409, 2019.

\bibitem{pHan19e}
{Jonte R.} Hance.
\newblock Counterfactuality, {D}efiniteness, and {B}ell's {T}heorem.
\newblock {\em arXiv}, September 2019.
\newblock 8 pages, no figures.

\bibitem{pEPR35}
A.~Einstein, B.~Podolski, and Rosen N.
\newblock Can quantum-mechanical description of physical reality be considered
  complete?
\newblock {\em Phys.Rev.}, 47:777--780, 1935.

\bibitem{pNor07}
T.~Norsen.
\newblock Against ``realism''.
\newblock {\em Found. Phys.}, 37:311--454, 2007.

\bibitem{pMau14a}
Tim Maudlin.
\newblock What {B}ell did.
\newblock {\em Journal of Physics A}, 47:424010, 2014.

\bibitem{pLau18a}
F.~Laudisa.
\newblock Counterfactual reasoning, realism and quantum mechanics: Much ado
  about nothing?
\newblock {\em Erkenn}, pages 1--16, 2018.

\bibitem{pLau18b}
F.~Laudisa.
\newblock Stop making sense of {B}ell's theorem and nonlocality?
\newblock {\em European Journal for Philosophy of Science}, 8:293--306, 2018.

\bibitem{pWis14}
H.~M. Wiseman.
\newblock The two {B}ell's theorems of {J}ohn {B}ell.
\newblock {\em J. Phys. A}, 47:424001, 2014.

\bibitem{pWer14}
R.~F. Werner.
\newblock Comment on ``{W}hat {B}ell did''.
\newblock {\em J. Phys. A}, 47:424011, 2014.

\bibitem{pBel76}
J.~S. Bell.
\newblock The theory of local beables.
\newblock {\em Epistemological Letters}, March 1976.

\bibitem{pSHC76}
M.~A.~Horne A.~Shimony and J.~F. Clauser.
\newblock Comment on ``{T}he {T}heory of {L}ocal {B}eables''.
\newblock {\em Epistemological Letters}, October 1976.

\bibitem{pHal11}
Michael J.~W. Hall.
\newblock Relaxed {B}ell inequalities and {K}ochen-{S}pecker theorems.
\newblock {\em Phys. Rev. A}, 84:022102, Aug 2011.

\bibitem{bHal16}
Michael J.~W. Hall.
\newblock {\em The Significance of Measurement Independence for Bell
  Inequalities and Locality}, pages 189--204.
\newblock Springer International Publishing, Cham, 2016.

\bibitem{pLam20a}
J.~P. Lambare.
\newblock A {N}ote on {B}ell's {T}heorem {L}ogical {C}onsistency.
\newblock {\em arXiv:2012.10238 [quant-ph]}, 2021.

\bibitem{pFel95}
Michel Feldmann.
\newblock New loophole for the {E}instein-{P}odolsky-{R}osen paradox.
\newblock {\em Foundations of Physics Letters}, 8(1):41--53, 1995.

\bibitem{pDeg05}
Julien Degorre, Sophie Laplante, and J\'er\'emie Roland.
\newblock Simulating quantum correlations as a distributed sampling problem.
\newblock {\em Phys. Rev. A}, 72:062314, Dec 2005.

\bibitem{pDPCB72}
L.~{De La Pe\~{n}a}, A.~M. Cetto, and T.~A. Brody.
\newblock On hidden-variable theories and {B}ell's inequality.
\newblock {\em Lett. Nuovo Cimento}, 5:177--181, 1972.

\bibitem{pLam17a}
J.~P. Lambare.
\newblock On the {CHSH} form of {B}ell's inequalities.
\newblock {\em Found. Phys.}, 47:321--326, 2017.

\bibitem{pDBa84a}
W.~{D}e Baere.
\newblock On the significance of {B}ell's inequality for hidden-variables
  theories.
\newblock {\em Lettere Al Nuovo Cimento}, 39(11):234--238, 1984.

\bibitem{pBel81}
J.~S. Bell.
\newblock Bertlmann's socks and the nature of reality.
\newblock {\em Journal of Physque}, 42:41--61, 1981.

\end{thebibliography}
\end{document}